\documentclass{amsart} 
\usepackage{geometry}
\geometry{letterpaper}

\usepackage{amssymb, amsmath, sidecap}
\usepackage{amsfonts}
\usepackage{float}
\usepackage{graphics}
\usepackage{graphicx}
\usepackage{subfigure}
\usepackage{verbatim}
\usepackage{textcomp}
\usepackage{caption2}
\usepackage{ textcomp }
\usepackage{ dsfont }

\def\bt{\begin{thm}}
\def\et{\end{thm}}
\def\bl{\begin{lem}}
\def\el{\end{lem}}
\def\bd{\begin{defi}}
\def\ed{\end{defi}}
\def\bc{\begin{cor}}
\def\ec{\end{cor}} 
\def\bp{\begin{proof}}
\def\ep{\end{proof}}
\def\br{\begin{rem}}
\def\er{\end{rem}}
\def\pr{\text{Pr}}
\def\div{\text{div}}
\def\bt{\begin{thm}}
\def\et{\end{thm}}
\def\bl{\begin{lem}}
\def\el{\end{lem}}
\def\bd{\begin{defi}}
\def\ed{\end{defi}}
\def\bc{\begin{cor}}
\def\ec{\end{cor}}
\def\bp{\begin{proof}}
\def\ep{\end{proof}}
\def\br{\begin{rem}}
\def\er{\end{rem}}

\newtheorem{thm}{Theorem}[section]
\newtheorem{lem}{Lemma}[section]
\newtheorem{defi}{Definition}[section]

\newtheorem{rem}{Remark}[section]
\newtheorem{cor}{Corollary}[section]

\newcommand{\mathsym}[1]{{}}
\newcommand{\unicode}[1]{{}}

\newcommand{\pd}[2]{\frac{\textstyle \partial #1}{\textstyle \partial #2}}

\def\la{\label}
\def\curl{\text{curl}}

\newcommand{\fr}[2]{\frac{#1}{#2}}

\newcommand{\de}{\partial}

\numberwithin{equation}{section}

\graphicspath{{./figures/}}

\title{Remarks on the Rayleigh-B\'enard Convection on Spherical Shells }
\author[Wang]{Shouhong Wang}
\address[Wang]{Department of Mathematics,
Indiana University, Bloomington, IN 47405}
\email{showang@indiana.edu, http://www.indiana.edu/\texttildelow fluid}

\author[Yang]{Ping Yang}
\address[Yang]{Department of Mathematics,
Indiana University, Bloomington, IN 47405}
\email{pingyang@umail.iu.edu}

\thanks{The work  was supported in part by the
US Office of Naval Research and by the US National Science Foundation.}

\keywords{Rayleigh-B\'enard convection on spherical shells, dynamic transition, center manifold reduction, spherical harmonics, pattern formation}
\subjclass{}

\begin{document}

\begin{abstract}The main objective of this article is to study the effect of spherical geometry on dynamic transitions and pattern formation for the Rayleigh-B\'enard convection. The study is mainly motivated  by the importance of spherical geometry and convection in geophysical flows. It is shown in particular that the system always undergoes a continuous (Type-I) transition  to a $2l_c$-dimensional sphere $S^{2l_c}$, where $l_c$  is the critical wave length corresponding to the critical Rayleigh number. Furthermore, it has shown in \cite{MW09c} that it is critical to add nonisotropic turbulent friction terms in the momentum equation to capture the large-scale atmospheric and oceanic circulation patterns. We show in particular that the system with turbulent friction terms added undergoes the same type of dynamic transition, and obtain an explicit formula linking the critical wave number (pattern selection), the aspect ratio, and the ratio between the horizontal and vertical turbulent friction coefficients.
\end{abstract}

\maketitle
\tableofcontents

\section{Introduction}
The main objective of this article is to study the  B\'enard convection of fluids on a spherical shells. The main motivations are two-fold. First, convection plays an important role in the large scale atmospheric and oceanic circulations. The phenomena of the atmosphere and ocean are extremely rich in their organization and complexity, and  involve a broad range of  temporal and spatial scales \cite{charney48}. To avoid to deal with the atmosphere and ocean in all their complexity, the main focus of this article is to study the B\'enard convection on spherical shells to capture the main effects caused by the spherical geometry. 
Second,  over the years,  the Rayleigh-B\'enard convection problem, together with the Taylor problem, has
become one of the paradigms for studying nonequilibrium phase transitions and pattern formation in nonlinear sciences; see among others  Busse \cite{busse1978non}, Chandrasekhar \cite{chandrasekhar}, Cross \cite{Cross1993a}, Ma and Wang \cite{ptd}, and the references therein.

For the Euclidean domain case, a systematic study on dynamic transition of the Rayleigh-B\'enard convection  is carried out in Ma and Wang \cite{MW04d,MW07a}. The main result in this direction is that the system always undergoes a Type-I (continuous) transition as the instability driving mechanism, namely Rayleigh number, crosses a critical threshold $R_c$, thanks to the symmetry of the linear operator, properties of the nonlinearity and asymptotic stability of the basic state at the critical threshold. Moreover, the system has a bifurcated attractor which is an $(m-1)$--dimensional homological sphere where $m$ is the number of critical eigenvalues of the linear operator. 

In this article, we consider a fluid which occupies a layer of spherical shell modeled by the  Boussinesq equations. Namely, the density function is treated as a constant except in the equation of state and in the buoyancy term. For simplicity, the spatial geometry of the domain is taken as the product of a two-dimensional sphere $S^2_a$ with radius $a$ and an interval: $S^2_a \times (0, h)$, where $h$  is the height of the fluid layer. This is mainly motivated by the fact that aspect ratio for the large scale atmosphere and ocean is small; see among others \cite{LTW92a, LTW92b, pedlosky87}. 

As in the Euclidean case, in the spherical shell case, we show that the system is always undergoes a Type-I (continuous) dynamic transition to a $2l_c$-dimensional homological sphere $\Sigma_R$, where $l_c$ is the critical wave number determined by \footnote{The integer $l_c$ is to make the left-hand side the closet to the right-hand side.} 
\begin{equation}\la{wavenumber}
\left[\frac{h}{a}\right]^2l_c(l_c+1) =\frac{\pi^2}{2}.
\end{equation}

An important part of this article is to show that $2l_c$-dimensional homological sphere $\Sigma_R$  is in fact {\it homeomorphic} to a $2l_c$-dimensional sphere $S^{2l_c}$. This is achieved by carrying out the reduction of the original Boussinesq equations to the center manifold generated by the first $2l_c+1$  modes. The main difficulty comes then from the nontrivial nonlinear interactions of the spherical harmonics--both scalar and vectorial. 

The remaining part of this article  is devoted to the introduction of  the turbulent  friction terms in the Boussinesq equations. This is crucial for capturing  the large scale convection patterns in the large scale geophysical flows, as demonstrated in thermohaline circulation studies by Ma and Wang \cite{MW09c}. The friction terms are added based on the convection scale considerations as introduced in \cite{MW09c}, and are given by $(\sigma_0 u, \sigma_1w)$; see (\ref{e3}). Under the assumption that 
$$ 1 \ll \sigma_0 \ll \sigma_1,$$
the corresponding critical wave number $l_c$ is approximately given by 
\begin{equation}\la{wavenumber}
\left[\frac{h}{a}\right]^2 l_c(l_c+1) \sim \frac{\pi^2}{2} \left[\frac{\sigma_0}{\sigma_1}\right]^{1/2}.
\end{equation}
As we know, for the large-scale atmospheric and oceanic circulations, the critical wave number $l_c$  is often relatively small, e.g. $l_c=6$ for the global Walker circulation over the tropics. Hence the small aspect ratio $h/a$  is then balanced by the smallness of the ratio $\sigma_0/\sigma_1$ between the horizontal and vertical turbulent friction coefficients, leading to moderated size of the critical wave numbers. 

It is worth mentioning that with the framework given in this article, one can study the humidity/salinity effects of the large scale atmospheric and oceanic flows, as well as the rotational effect (Coriolis force) of the earth. We shall report this elsewhere. 

The article is organized as follows. Section 2 introduces the Boussinesq equations on the spherical shell, 
and the linear stability analysis is carried in Section 3. Sections 3 and 4 are on dynamic transitions and the structure of the bifurcated attractor. Section 6 demonstrates the need to introduction of convection scales, and studies the dynamics transitions for the Boussinesq equations with the added friction terms. 

\section{Boussinesq Equations}
We start with the prototype of problem, the B\'enard convection problem  for a layer of  fluid in a spherical shell. We consider the case where the fluid can be modeled by the Boussinesq equations. Namely, the density is considered as constant except in the buoyancy term and in the equation of state.  The equations are given as follows:
\begin{equation}\label{e1}
\begin{aligned}
&u_t+(u\cdot{\nabla_3})u-\nu\Delta_3{u}+\rho_0^{-1}\nabla_3p=-g\overrightarrow{k}(1-\alpha(T-\overline{T_0})),\\
&T_t+(u\cdot\nabla_3){T}-\kappa\Delta_3T=0,\\
&\div_3u=0.
\end{aligned}
\end{equation}
Here the unknown functions are velocity field $u=(u_1,u_2,u_3)$, the temperature function $T$, and the pressure function $p$. Other parameters include: $\nu$ is the kinematic viscosity, $\overrightarrow{k}=(0,0,1)$ is the unit vector in the vertical direction, g is the gravity constant, $\alpha$ is coefficient of thermal expansion of the fluid, $\overline{T_0}$ is temperature on the lower surface, $\rho_0$ is the density,$\kappa$ is he thermal diffusivity.For simplicity, we consider the spherical domain as 
$$\Omega_d =S_a^2\times(0,h),$$
where $S_a^2$ is the two dimensional sphere with radius $\alpha$, and $h$ is the height of the fluid layer. The motion for this type of geometry is due to the study of large scale geophysical fluid flows, where the aspect ratio, and the thin layer of spherical shell can be approximated by $\Omega_d$ given here.

It's easy to get the basic solution:
\[\begin{aligned}
&(u,T)=(0,\overline{T_0}-\beta{x_3}),\\
&p=p_0-\rho_0g(x_3+\frac{\alpha\beta}{2}x_3^2).
\end{aligned}
\]
where $$\beta=\frac{\overline{T_0}-\overline{T_1}}{h},\ r=\frac{a}{h},$$
and $\overline{T_1}$ is temperature on the upper surface.

Mathematically speaking, it is convenient to introduce the nondimensional form of the equation. For this purpose, let
\[\begin{aligned}
&(x,t)=(hx',\frac{h^2t'}{\kappa}),\\
&(u,T)=(\frac{\kappa u,}{h},\beta{h}\frac{T'}{\sqrt{R}}+\overline{T_0}-\beta{h}x_3'),\\
&p=\rho_0\frac{\kappa^2p'}{h^2}+p_0-g\rho_0(hx_3'+\frac{\alpha\beta{h^2}}{2}(x_3')^2),\\
\end{aligned}\]
In addition, the Rayleigh number $R$ and the Prandtl number $\pr$  are defined by
$$R=\frac{g\alpha\beta}{\kappa\nu},\  \pr=\fr{\nu}{\kappa}.$$
The nondimensional domain is $$\Omega= S_r^2\times(0,1),$$
and it's reasonable to use the spherical coordinates. Let $(u,w)$ be the 3 dimensional velocity, u be the horizontal velocity field, and w be the vertical velocity. Also the following notations $\div, \nabla$ are differential operators in the horizontal direction. Then the system (\ref{e1}) can be written as:
 \begin{equation}\label{e22}
 \begin{aligned}
 &\frac{1}{\pr}(u_t+\nabla_uu+w\frac{\partial{u}}{\partial{z}}+\nabla{p})-(\Delta+\frac{\partial^2}{\partial{z}^2})u=0,\\
 &\frac{1}{\pr}(w_t+\nabla_uw+w\frac{\partial{w}}{\partial{z}}+\frac{\partial{p}}{\partial{z}})-(\Delta+\frac{\partial^2}{\partial{z}^2})w-\sqrt{R}T=0,\\
 &T_t+\nabla_uT+w\frac{\partial{w}}{\partial{z}}-\sqrt{R}w-(\Delta+\frac{\partial^2}{\partial{z}^2})T=0,\\
 &\div{u}+\frac{\partial{w}}{\partial{z}}=0,
 \end{aligned}
 \end{equation}
supplemented with the following free-slip boundary condition:
 \begin{equation}
 \label{e23}w=0, T=0,  \frac{\partial{u}}{\partial{z}}=0\ at \ z=0,1.
 \end{equation}
It is worth mentioning that we can also study other physically sound boundary conditions such as the non-slip boundary  or the periodic boundary conditions.

 The following are some differential operators at spherical coordinates on $S_r^2$:
 \begin{equation*}
 \begin{aligned}
 &\nabla T=\fr{1}{r}\pd{ f}{\theta}e_{\theta}+\fr{1}{r\sin\theta}\pd{f}{\varphi}e_{\varphi},\\
 &\curl T=\fr{1}{r\sin\theta}\pd{f}{\varphi}e_{\theta}-\fr{1}{r}\pd{ f}{\theta}e_{\varphi},\\
 &\nabla_uv=\frac{1}{r}(u_\theta\frac{\partial{v_\theta}}{\partial\theta}+\frac{u_\varphi}{\sin\theta}\frac{\partial{v_\theta}}{\partial\varphi}-u_\varphi{v}_\varphi{\cot\theta})e_\theta+\frac{1}{r}(u_\theta\frac{\partial{v_\varphi}}{\partial\theta}+\frac{u_\varphi}{\sin\theta}\frac{\partial{v_\varphi}}{\partial\varphi}+u_\varphi{v}_\theta{\cot\theta})e_\varphi,\\
&\nabla_uT=\frac{1}{r}(u_\theta\frac{\partial{T}}{\partial\theta}+\frac{u_\varphi}{\sin\theta}\frac{\partial{T}}{\partial\varphi}),\\
&\Delta{T}=\frac{1}{r^2}\frac{1}{\sin\theta}\left[\frac{\partial}{\partial\theta}(\sin\theta\frac{\partial{T}}{\partial\theta})+\frac{1}{\sin\theta}\frac{\partial^2T}{\partial\varphi^2}\right],\\
&\Delta{u}=(\Delta{u_\theta}-\frac{2\cos\theta}{r^2\sin^2\theta}\frac{\partial{u_\varphi}}{\partial\varphi}-\frac{u_\theta}{r^2\sin^2\theta})e_\theta+(\Delta{u_\varphi}+\frac{2\cos\theta}{r^2\sin^2\theta}\frac{\partial{u_\theta}}{\partial\varphi}-\frac{u_\varphi}{r^2\sin^2\theta})e_\varphi.
\end{aligned}
\end{equation*}
where $u$, $v$ are vectors, and $T$ is scale.

For the problem (\ref{e22}) with (\ref{e23}), we define the function spaces as 
\begin{equation}\label{e5}
\begin{aligned}
&H=\{(u,w,T)\in L^2(\Omega)^4 |\  div u=0; (w,T)=0,\ at \ z=0,1\},\\
&H_1=\{ (u,w,T)\in H^2(\Omega)^4 |\  div u=0;  (w,T)=0, \frac{\de{u}}{\de z }=0\ at \ z=0,1\}.
\end{aligned}
\end{equation}

Let $L_\lambda: H_1\rightarrow H$, and $G: H_1\rightarrow H$ be defined by

\begin{equation}\label{e6}
\begin{aligned}
&L_\lambda\phi=P\left[
\begin{aligned}
&\pr(\Delta+\frac{\partial^2}{\partial{z^2}})u\\
&\pr(\Delta+\frac{\partial^2}{\partial{z^2}})w+\pr\lambda{T}\\
&(\Delta+\frac{\partial^2}{\partial{z^2}})T+\lambda{w}
\end{aligned}
\right],\\
&G(\phi)=-P\left[
\begin{aligned}
&\nabla_uu+w\frac{\partial{u}}{\partial{z}}\\
&\nabla_uw+w\frac{\partial{w}}{\partial{z}}\\
&\nabla_uT+w\frac{\partial{T}}{\partial{z}}
\end{aligned}
\right],
\end{aligned}
\end{equation}
for any $\phi=(u,w,T) \in H_1$. Here $\lambda=\sqrt{R}$, and $P:L^2(\Omega)^4\rightarrow H$ is the Leray projection.

Then the equations (\ref{e22})   with (\ref{e23}) can be rewritten in the following operator form:
\begin{equation}\label{e7}
\begin{aligned}
& \frac{\de\phi}{\de t}=L_\lambda\phi+G(\phi), \\
& \phi=\phi_0  && \text{ for } t=0.
\end{aligned}
\end{equation}
Here $\phi_0 \in H$  is the initial condition. It is then classical to prove the existence and properties of  solutions; see \cite{fmt, LTW92b}; we omit the details here.

\section{Linear Analysis and Principle of Exchange of Stability}
To study the dynamic transitions of the B\'enad convection problem (\ref{e22}) on the spherical shells, 
we need to consider the linear  eigenvalue problem:
 \begin{equation}\label{e31}L_\lambda\phi=\beta(\lambda)\phi,\end{equation}
which is equivalent to 
\begin{equation}\label{e32}
\begin{aligned}
&\pr(\Delta+\frac{\partial^2}{\partial{z^2}})u-\nabla p=\beta(\lambda)u,\\
&\pr(\Delta+\frac{\partial^2}{\partial{z^2}})w+\pr\lambda T-\fr{\de p}{\de z}=\beta(\lambda)w,\\
&(\Delta+\frac{\partial^2}{\partial{z^2}})T+\lambda w=\beta(\lambda)T,\\
&\div_3(u,w)=0,
\end{aligned}
\end{equation}
supplemented with the boundary condition (\ref{e23}).

When  $(u,w,T)$ doesn't depend on vertical $z$, then we have $(w,T)=0$, and $u=\curl Y_{l,m}$,
so the eigenvalue and eigenfunctions are as follows:
\begin{equation}\label{evef1}
\begin{aligned}
\beta_{l0}&=&-\Pr \alpha_l^2,\\
\Psi_{lm0}&=&(\curl Y_{lm},0,0).
\end{aligned}
\end{equation}
Thanks to the spherical geometry of the domain, the above eigenvalue problem can solved using  the following separation of variables:
\begin{equation}\label{e33}
\begin{aligned}
&u=\nabla{f(\theta,\varphi)}H'(z),\\
&w=\alpha^2f(\theta,\varphi)H(z),\\
&T=f(\theta,\varphi)\Theta(z).
\end{aligned}
\end{equation}
By (\ref{e33}), we infer from (\ref{e32}) and (\ref{e4})   that
\begin{align}
\label{e34}&-\Delta{f}=\alpha^2f,\\
\label{e35}&\pr(-\alpha^2\nabla{f}H'+\nabla{f}H''')-\nabla p=\beta\nabla fH',\\
\label{e36}&\pr(-\alpha^4fH+\alpha^2fH''+\lambda f\Theta)-\fr{\de p}{\de z}=\beta\alpha^2 fH,\\
\label{e37}&-\alpha^2f\Theta+f\Theta''+\lambda\alpha^2fH=\beta f\Theta,\\
\label{e38}&(\Theta,H,H'')=0\ at\ z=0,1.
\end{align}
For  the eigenvalue problem  (\ref{e34})  on the 2D sphere, the only possible candidates for the eigenfunction and eigenvalues   are given by  
\begin{align}
&\label{f}
f=Y_{lm}(\theta,\varphi),
\\
&\label{a}
\alpha_l^2=\frac{l(l+1)}{r^2}.
\end{align}
To eliminate the pressure function $p$, we do  $\fr{\de(\ref{e35})}{\de z}-\nabla(\ref{e36})$: 
\begin{equation}\label{theta}
\Theta=\fr{\pr H''''-(2\pr\alpha^2+\beta)H''+(\alpha^4\pr+\beta\alpha^2)H}{-\pr\lambda}.
\end{equation}

Then we can consider two cases. First, for the case where $l=0$, it is easy to see that the eigenvalues and eigenvectors of $L_\lambda$ are:
\begin{equation}\label{evef2}
\begin{aligned}
\beta_{0n}&=&-n^2\pi^2,\\
\Psi_{00n}&=&(0,0,\sin n\pi{z}).
\end{aligned}
\end{equation}
When $l\ne{0}$, plug (\ref{theta}) into equation (\ref{e37}), then $H, \ \Theta$ are as follows:
\begin{align}
\label{h}&H_n=\sin n\pi{z},\\
\label{H}&\Theta_n=b_{ln}\sin n\pi z.
\end{align}
Then plugging (\ref{h}) and (\ref{H}) into (\ref{theta}) and (\ref{e37}), we obtain that 
\begin{align}
\label{b}&b_{ln}=\fr{\lambda\alpha_l^2}{\beta_{ln}+\gamma_{ln}^2}=\fr{\pr \gamma_{ln}^4+\beta\gamma_{ln}^2}{-\pr\lambda},
\end{align}
where $$\gamma_{ln}^2=n^2\pi^2+\alpha_l^2=n^2\pi^2+\fr{l(l+1)}{r^2}.$$
From (\ref{b}), the eigenvalue $\beta$ solves the following quadratic equation:
$$\beta^2+(1+\pr)\gamma_{ln}^2\beta+(\gamma_{ln}^4-\fr{\lambda^2\alpha_l^2}{\gamma_{ln}^2})\pr=0$$
 we get two types of eigenvalues and eigenvectors:
 \begin{equation}\label{evef3}
\begin{aligned}
&\beta^{+}_{ln}=\frac{1}{2}[-\gamma_{ln}^2(1+\pr)+(\gamma_{ln}^4(1+\pr)^2-4\pr(\gamma_{ln}^4-\frac{\lambda^2\alpha_l^2}{\gamma_{ln}^2}))^{1/2}],\\
&\Psi_{lmn}^{+}=(n\pi{\cos n\pi{z}}\nabla{Y}_{lm},\  \alpha_l^2Y_{lm}\sin n\pi{z},\  \fr{\lambda\alpha_l^2}{\beta_{ln}^++\gamma_{ln}^2}Y_{lm}\sin n\pi{z}),\\
&\beta_{ln}^{-}=\frac{1}{2}[-\gamma_{ln}^2(1+\pr)-(\gamma_{ln}^4(1+\pr)^2-4\pr(\gamma_{ln}^4-\frac{\lambda^2\alpha_l^2}{\gamma_{ln}^2}))^{1/2}],\\
&\Psi_{lmn}^{-}=(n\pi{\cos n\pi{z}}\nabla{Y}_{lm},\  \alpha_l^2Y_{lm}\sin n\pi{z},\  \fr{\lambda\alpha_l^2}{\beta_{ln}^-+\gamma_{ln}^2}Y_{lm}\sin n\pi{z}).
\end{aligned}
\end{equation}

So for problem (\ref{e22}) with boundary condition (\ref{e23}), the eigenfunctions for the corresponding linear operator as in (\ref{e32}) are consist of three group, which is shown in (\ref{evef1}), (\ref{evef2}), and (\ref{evef3}).

Now we define the critical Rayleigh number $R_c=\lambda_c^2$  by
 \begin{equation}\label{ld}
\lambda_c=\min_{l,n}\frac{(n^2\pi^2+\alpha_l^2)^{3/2}}{\alpha_l}=\min_l\fr{(\pi^2+\alpha_l^2)^{3/2}}{\alpha_l}.
\end{equation}
It is easy to see that  the minimum in the above formula is  $\lambda_c$ is achieved at $\alpha_{l_c}^2$. By definition, we have (see (\ref{wavenumber})):
\begin{equation} \la{cwn}
\left[\frac{h}{a}\right]^2 l_c(l_c+1)=\frac{\pi^2}{2}.
\end{equation}
As $l$ can only take integer values, it is possible for the minimum in (\ref{ld}) is achieved at two consecutive $l$'s. For simplicity and without loss of generality, 
in this article,  we only consider the case where  the minimum is achieved at $\alpha_{l_c}$ with an 
 integer $l_c$:
\begin{equation}
\label{lc}
\lambda_c=\fr{(\pi^2+\alpha_{l_c}^2)^{3/2}}{\alpha_{l_c}} =\frac{3\sqrt{3}\pi^2}{2} \qquad \text{ with } \qquad \alpha_{l_c}= \frac{l_c(l_c+1)}{r^2},
\end{equation}
and the critical Rayleigh number is 
\begin{equation}
R_c=\lambda_c^2 = \frac{27\pi^4}{4}=657.5.\la{rc}
\end{equation}
This value is consistent with the value developed classically; see among many others \cite{chandrasekhar, dr}.

By definition, the following principle of exchange of stability (PES) holds true:
\begin{equation*}
\begin{aligned}
&\beta_{l_c1}^+(\lambda)\left\{\begin{aligned}
&<0&&\text{if} \ \lambda<\lambda_c,\\
&=0&&\text{if}\ \lambda<\lambda_c,\\
&>0&&\text{if} \  \lambda>\lambda_c,
\end{aligned}\right.\\
&\beta_{ln}^+(\lambda_c)<0&& \forall (l,n)\ne (l_c,1),\ l,n>0\\
&\beta_{ln}^-(\lambda_c)<0 &&\forall \  (l,n),\ l,n>0,\\
&\beta_{l0}(\lambda_c)<0 &&\forall\  l> 0,\\
&\beta_{0n}(\lambda_c)<0 &&\forall\  n> 0.
\end{aligned}
\end{equation*}

The critical eigenspace is then given by 
\begin{equation}\la{space-e1}
E_1=\{\sum_{m=-l_c}^{l_c}x_m\Psi_{l_cm1}^+\ |\ x_{-m}=(-1)^mx_m^*,\ x_m\in\mathds{C},\ |m|\le l_c\} 
\end{equation}
with dimension dim$E_1=2l_c+1$.

\section{Dynamical Transitions}
As mentioned in the introduction, the linear operator $L_\lambda$ is symmetric, leading to the real eigenvalues $\beta_{l,n}^\pm$, $\beta_{0,n}$. With the PES at our disposal,  we can obtain immediately that the B\'enard convection system will undergoes a dynamic transition, and the following theorem shows that the transition is always Type-I. 
\begin{thm}\la{t4.1}
The B\'enard convection problem (\ref{e7}) always undergoes a Type-I dynamic transition as the Rayleigh number $R$ crosses the critical value ${R}_c$. In particular, the following assertions hold true:

\begin{enumerate}

\item When $R\le {R}_c$, the basic motionless state with linear temperature profile $ \phi=(u,w,T)=0$ is globally asymptotically stable.

\item As $R>R_c$, but near $R_c$, the problem bifurcates to an attractor $\Sigma_R$, which is 
an $2l_c$-dimensional homological sphere, where the integer $l_c$  is defined  by (\ref{lc}).
 
\item For any $\phi=(u,w,T)\in\Sigma_R$,
$$\phi=\sum_{m=-1_c}^{l_c} {x_m}\Psi_{l_cm1}^{+}   +o(\sum_{m=-l_c}^{l_c} |x_m|),\,\,\,\, x_{-m}=(-1)^mx_m^*,\,\,\, x_m\in \mathds{C},\ |m|\le l_c$$

\item The attractor $\Sigma_R$ attracts $H\setminus\Gamma$, where $\Gamma$ is the stable manifold of $\phi=0$ with codimension $2l_c+1$.

\item The attractor $\Sigma_R$ contains at least one steady state solution of the problem.
\end{enumerate}

\end{thm}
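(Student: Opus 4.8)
The plan is to read assertions (2)--(5) as a direct application of the attractor bifurcation theorem of Ma and Wang (see \cite{ptd}), whose two hypotheses are: the principle of exchange of stability, already verified above for the $m=2l_c+1$ critical eigenvalues $\beta^+_{l_cm1}$, $|m|\le l_c$; and the asymptotic stability of the trivial state $\phi=0$ at the critical value $\lambda=\lambda_c$. The latter is precisely assertion (1) at $R=R_c$, so the crux of the entire theorem is this single stability statement, after which everything else is read off from the abstract machinery together with a topological index argument.

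First I would prove assertion (1) by producing a global Lyapunov functional. The operator $L_\lambda$ is symmetric with respect to the weighted inner product $\langle\phi,\psi\rangle_*=(u,u')+(w,w')+\pr\,(T,T')$ on $H$, the weight $\pr$ compensating the asymmetric $w$--$T$ coupling in (\ref{e6}); moreover the advection nonlinearity satisfies $\langle G(\phi),\phi\rangle_*=0$, a consequence of $\div_3(u,w)=0$, the boundary conditions (\ref{e23}), and integration by parts in both the momentum and temperature transport terms. Hence the energy $E(\phi)=\tfrac12\|\phi\|_*^2$ evolves along (\ref{e7}) by $\tfrac{d}{dt}E=\langle L_\lambda\phi,\phi\rangle_*\le\beta^+_{l_c1}(\lambda)\|\phi\|_*^2$. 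For $R<R_c$ the top eigenvalue is strictly negative and one gets exponential decay; for $R=R_c$ one has $\tfrac{d}{dt}E\le0$ with equality only on $E_1=\ker L_{\lambda_c}$. \textbf{The critical case $R=R_c$ is the main obstacle.} I would resolve it with LaSalle's invariance principle: the $\omega$-limit set of any orbit lies in the largest invariant subset of $E_1$. But $G$ maps $E_1$ into $E_1^{\perp}$ --- the advective products of the critical $n=1$ vertical modes ($w,T\sim\sin\pi z$, $u\sim\cos\pi z$) generate only the vertical wavenumbers $n=0,2$, orthogonal to the $n=1$ structure of every $\Psi^+_{l_cm1}$ --- so for $\phi\in E_1$ the field $\dot\phi=G(\phi)$ is transverse to $E_1$ and the orbit leaves $E_1$ unless $G(\phi)=0$, which near the origin forces $\phi=0$. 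Thus $\phi=0$ is globally asymptotically stable at $\lambda_c$, supplying both assertion (1) and the missing hypothesis.

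With both hypotheses secured, the attractor bifurcation theorem yields assertions (2)--(4) at once: for $R>R_c$ near $R_c$ the trivial state loses stability in exactly the $2l_c+1$ directions of $E_1$, the system bifurcates to a compact attractor $\Sigma_R$ having the homology of $S^{2l_c}$ (dimension $m-1=2l_c$; note that only the homological statement, not the sharper homeomorphism deferred to later sections, is needed here), and $\Sigma_R$ attracts $H\setminus\Gamma$, where $\Gamma$ is the codimension-$(2l_c+1)$ stable manifold of $0$. Assertion (3) is then the center-manifold representation: $\Sigma_R$ lies on the center manifold, a graph $\phi=\xi+\Phi(\xi)$ over $E_1$ with $\Phi(\xi)=o(\|\xi\|)$, so writing $\xi=\sum_m x_m\Psi^+_{l_cm1}$ subject to the reality constraint $x_{-m}=(-1)^m x_m^*$ from (\ref{space-e1}) gives the claimed expansion.

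Finally, assertion (5) follows from a Lefschetz fixed-point argument: because $2l_c$ is even, $\Sigma_R$ carries the nonzero Euler characteristic $\chi(S^{2l_c})=2$, and the flow restricted to the invariant set $\Sigma_R$ is homotopic to the identity, so the Lefschetz number of each time-$t$ map equals $2\ne0$; by compactness this forces a common fixed point of the time-$t$ maps, that is, a steady state of (\ref{e7}) inside $\Sigma_R$. I expect the verification of $\langle G(\phi),\phi\rangle_*=0$ in the correct weighted inner product and the LaSalle step at exactly $\lambda=\lambda_c$ to be the only genuinely delicate points; the remaining topology is standard once the abstract bifurcation theorem applies.
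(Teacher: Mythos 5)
Your treatment of assertions (1)--(4) is essentially the paper's: verify the PES, establish asymptotic stability of $\phi=0$ at the critical Rayleigh number, and invoke the Ma--Wang attractor bifurcation theorem, with assertion (3) read off from the center manifold representation. (The paper simply cites \cite{MW04d} for the critical-case stability rather than proving it; your energy-plus-LaSalle sketch is the standard way to carry this out, and both the weighted inner product making $L_\lambda$ symmetric and the identity $\langle G(\phi),\phi\rangle_*=0$ are correct. One caveat: since $G$ is a homogeneous quadratic operator, the phrase ``which near the origin forces $\phi=0$'' is empty --- the zero set of $G$ restricted to $E_1$ is a cone, and what you actually need is that this cone is $\{0\}$, i.e., that there are no nontrivial steady states lying inside $E_1$ at criticality. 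This requires a computation with the spherical-harmonic interaction coefficients; for $l_c=1$ it does follow from the list (\ref{cmf-coe}), where $x_{\pm 1}^2=0$ and $x_0^2+x_{-1}x_1=0$ force $x=0$, but you have not addressed general $l_c$.)

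The genuine gap is in assertion (5). The paper proves it by rewriting the steady-state problem as $[\mathrm{id}+(-A)^{-1}B_\lambda]\phi=(-A)^{-1}G(\phi)$ and applying the Krasnoselski bifurcation theorem, a Leray--Schauder degree argument in $H$ that uses only the \emph{odd multiplicity} $2l_c+1$ of the critical eigenvalue. You instead apply the Lefschetz fixed point theorem to the flow restricted to $\Sigma_R$, using $\chi(S^{2l_c})=2\neq 0$. But the Lefschetz--Hopf theorem requires the underlying space to be (at least) a compact ANR, and assertion (2) gives only that $\Sigma_R$ is a \emph{homological} sphere in the sense of \v{C}ech (co)homology; such a bifurcated attractor is a priori just a compact invariant set with no local regularity, and a nonvanishing Lefschetz number does not force a fixed point on an arbitrary continuum (Bellamy's tree-like, hence \v{C}ech-acyclic, continuum admits a fixed-point-free self-map, so the Lefschetz number $1$ yields nothing there). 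You also cannot borrow the homeomorphism $\Sigma_R\cong S^{2l_c}$, since that is established only later (Theorems \ref{t2.2} and \ref{t2.3}) and its proof invokes assertion (5). The argument is repairable: work inside the $(2l_c+1)$-dimensional center manifold, choose a compact positively invariant annular neighborhood $N$ of $\Sigma_R$ (outer sphere flowing inward by the cubic damping, inner sphere flowing outward since $\phi=0$ is a repeller for $R>R_c$); $N$ is a compact ANR with $\chi(N)=\chi(S^{2l_c})=2$, so each time-$t$ map on $N$ has a fixed point, the usual dyadic limit $t\to 0$ produces a stationary point, and any stationary point in the basin of $\Sigma_R$ must lie in $\Sigma_R$. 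As written, however, applying Lefschetz directly to $\Sigma_R$ is unjustified. Note that the parity fact you exploit ($2l_c$ even) is exactly the one the paper uses ($2l_c+1$ odd); the paper's degree-theoretic route buys freedom from any regularity hypothesis on the attractor, which is precisely where your version breaks.
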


\begin{proof}
As in \cite{MW04d},  we can show that the basic motionless state is (globally) asymptotically stable  the critical eigenvalue  $\lambda=\lambda_c$. By the attractor bifurcation theorem in \cite{b-book}  and the linear analysis in the previous subsection, we obtain immediately Assertions (1)--(4) in the theorem.

It suffices then to  show the existence of at least one steady state in the bifurcated attractor $\Sigma_R$. 
To this end,  the steady state problem for (\ref{e7}) can be written as 
\begin{equation}\label{ss}
-A\phi - B_\lambda\phi =G(\phi),
\end{equation}
where $A + B_\lambda =L_\lambda$ and the linear operator $A$  and $B_\lambda$  are defined by 
\begin{equation}
A\phi=P\left[
\begin{aligned}
&\pr(\Delta+\frac{\partial^2}{\partial{z^2}})u \\
&\pr(\Delta+\frac{\partial^2}{\partial{z^2}})w\\
&(\Delta+\frac{\partial^2}{\partial{z^2}})T
\end{aligned}
\right],
\qquad 
B_\lambda\phi=P\left[
\begin{aligned}
&0 \\
&\pr\lambda{T}\\
&\lambda{w}
\end{aligned}
\right],\qquad \forall \phi=(u,w,T) \in H_1.
\end{equation}
Obviously, the operator  $A$  is invertible, and (\ref{ss})  can be written as 
\begin{equation} \la{ss1}
[\text{id} + (-A)^{-1} B_\lambda] \phi = (-A)^{-1}  G(\phi). 
\end{equation}
Then it is classical to show that the linear operator $\text{id} + (-A)^{-1} B_\lambda: H \to H$  is a completely continuous field, and $(-A)^{-1}  G: H \to H$ is a compact operator. 
Hence by  the Krasnoselski bifurcation theorem (see e.g. Theorem 1.10 in \cite{b-book}),  
(\ref{ss1}) has a nontrivial steady state bifurcation at $R_c$, as the critical eigenvalue $R_c$ has odd multiplicity $2l_c+1$, which is the dimension of the eigenspace $E_1$.  The proof is complete.\end{proof}

\section{Structure and Patterns of the B\'enard Convection}
The above theorem shows that the system always undergoes a dynamic transition as $R$ crosses  the critical Rayleigh number $R_c$,  and the transition states occupy a set $\Sigma_R$, homological to $S^{2l_c}$. 

The main objective of this section is to derive more detailed structure of this bifurcated attractor. In particular, we shall give a strategy to show that $\Sigma_R$ is in fact {\it homeomorphic} to $S^{2l_c}$. 

The approach for this purpose is reduce the governing partial differential equations to the center manifold generated by the unstable modes for $R> R_c$ and near $R_c$. As the general case is tedious and time consuming, we present here only two special cases  where critical wave number $l_c=1$  or $2$. The other cases can be studied in the same fashion. 

\begin{thm} \la{t2.2}
Consider the B\'enard convection in the case where $l_c=1$, which is equivalent in the dimensional form to 
$$\frac{h}{a}=\frac{\pi}{2},$$
where $a$  is the horizontal length scale and $h$ is the vertical scale. Then the bifurcated attractor $\Sigma_R$ is homeomorphic  to $S^2$, consisting of only degenerate steady states. Furthermore, $\Sigma_R$  is approximated by
$$
\Sigma_R\simeq \{ \sum_{m=-1}^1\Psi_{1m1}^+  \quad | \quad x_{-m}=(-1)^mx_m^*,\ x_0|^2+|x_1|^2+|x_{-1}|^2=\beta_{11}^+/q_1 \},
$$
where $q_1$ is given by
$$
q_1=\fr{3 \pi^3 (17787 + 355912 \pr - 669713 \pr^2 + 387787 \pr^3)}{400000 \pr (353 - 625 \pr + 353 \pr^2)}.
$$
Namely, for any $\phi\in \Sigma_R$,
\begin{align*}
&\phi=\sum_{m=-1}^1x_m\Psi_{1m1}^++o((\beta_{11}^+/q_1)^{1/2}),\\
&x_0^2+|x_{-1}|^2+|x_1|^2=\beta_{11}^+/q_1,\\
&x_{-m}=(-1)^mx_m^*,\,\,\,  |m|\le1.
\end{align*}
\end{thm}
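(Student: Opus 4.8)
The plan is to reduce the full Boussinesq system (\ref{e7}) to its three-dimensional center manifold at criticality and to read off the structure of $\Sigma_R$ from the reduced flow. Since $l_c=1$, the critical eigenspace $E_1$ in (\ref{space-e1}) is spanned by the three modes $\Psi_{1m1}^+$, $m=-1,0,1$, so I would write $\phi=\xi+\Phi(\xi)$ with $\xi=\sum_{m=-1}^1 x_m\Psi_{1m1}^+\in E_1$ and $\Phi(\xi)$ the center-manifold function valued in the complementary stable space $E_2$. Let $P_1$ be the spectral projection onto $E_1$ and $P_2=\mathrm{id}-P_1$. Since $G$ in (\ref{e6}) is exactly quadratic, write $G(\phi)=G_2(\phi,\phi)$ with $G_2$ its symmetric bilinear part. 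Projecting (\ref{e7}) onto $E_1$ gives the reduced system
\[
\frac{dx_m}{dt}=\beta_{11}^+(\lambda)\,x_m+\big[P_1G(\xi+\Phi(\xi))\big]_m,\qquad m=-1,0,1,
\]
and everything hinges on expanding the last term to cubic order in $x=(x_{-1},x_0,x_1)$, subject to the reality constraint $x_{-m}=(-1)^mx_m^*$.

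The first step is to show that the quadratic self-interaction $P_1G_2(\xi,\xi)$ vanishes, so that the leading nonlinearity is cubic. This is forced by the parity of the critical modes about $z=\tfrac12$: the $w$- and $T$-components of $\Psi_{1m1}^+$ are even and its $u$-component is odd under $z\mapsto 1-z$, whereas a short check using the explicit form of $G$ shows that each component of $G_2(\xi,\xi)$ has the opposite parity in the corresponding slot (a horizontal product of two odd factors, or of two even factors with one $\partial_z$, etc.). Hence $G_2(\xi,\xi)$ is $L^2$-orthogonal to $E_1$ and $P_1G_2(\xi,\xi)=0$. Consequently the bifurcation is driven entirely by the cubic term $2P_1G_2\big(\xi,\Phi(\xi)\big)$.

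Next I would compute $\Phi$ to leading (quadratic) order from the center-manifold equation, which at this order reduces to the linear problem $\Phi(\xi)=-(L_\lambda|_{E_2})^{-1}P_2G_2(\xi,\xi)+o(|\xi|^2)$. Because the products $Y_{1m}Y_{1m'}$ and the corresponding products of $\nabla Y_{1m}$ and $\curl Y_{1m}$ expand into scalar and vector spherical harmonics of degrees $l=0,1,2$ with vertical wavenumbers $n=0,2$, the correction $\Phi(\xi)$ lives in a finite, explicitly computable collection of non-critical modes, on each of which $L_\lambda$ is boundedly invertible by the PES. Substituting back, $2P_1G_2(\xi,\Phi(\xi))$ is an $SO(3)$-equivariant cubic map of $E_1$ into itself, since rotations of $S_r^2$ commute with both $L_\lambda$ and $G$. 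Representation theory then pins down its form: $E_1$ carries the spin-$1$ (vector) representation, and $\mathrm{Sym}^3$ of spin $1$ contains spin $1$ exactly once, so the space of equivariant cubics is one-dimensional. Hence the cubic term is a real multiple of $|x|^2x_m$, where $|x|^2=\sum_{k=-1}^1|x_k|^2$ is the unique $SO(3)$-invariant quadratic form, and the reduced system takes the normal form
\[
\frac{dx_m}{dt}=\beta_{11}^+(\lambda)\,x_m-q_1\Big(\textstyle\sum_{k=-1}^1|x_k|^2\Big)x_m+o(|x|^3),\qquad m=-1,0,1.
\]

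Given this normal form the conclusion is immediate. For $R>R_c$ the PES gives $\beta_{11}^+>0$, and provided $q_1>0$ the nontrivial equilibria form the sphere $\sum_m|x_m|^2=\beta_{11}^+/q_1$, which is a single orbit $SO(3)/SO(2)\cong S^2$; each point is a steady state whose linearization annihilates the two orbital (tangent) directions, so every steady state is degenerate while the radial direction is strictly attracting. This identifies the attractor $\Sigma_R$ of Theorem \ref{t4.1} as homeomorphic to $S^2$, and passing back through $\phi=\xi+\Phi(\xi)$ with $\Phi(\xi)=O(|\xi|^2)=o(|\xi|)$ yields the stated approximation with error $o((\beta_{11}^+/q_1)^{1/2})$. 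The main obstacle, and the only place genuine computation is required, is the explicit evaluation of the scalar $q_1$: this needs the nontrivial Gaunt/Clebsch--Gordan coefficients for the products of the scalar and vector spherical harmonics $Y_{1m}$, $\nabla Y_{1m}$ and $\curl Y_{1m}$, the solution of the several mode-by-mode linear problems defining $\Phi$, and the reprojection onto $E_1$; checking that the resulting rational function of $\pr$ stays positive then secures the supercritical ($S^2$) conclusion.
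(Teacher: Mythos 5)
Your proposal follows the same overall scaffolding as the paper's proof: reduce to the center manifold over the $3$-dimensional critical space $E_1$, observe that the quadratic self-interaction projects to zero on $E_1$ (your $z\mapsto 1-z$ parity argument is a clean way to see what the paper gets from the vertical wavenumber selection: products of $n=1$ modes only excite $n=0,2$), solve for $\Phi$ at second order from $-L_\lambda\Phi=P_2G_2(\xi,\xi)+o(2)$, and land on the cubic normal form $\dot x_m=\beta_{11}^+x_m-q_1\bigl(\sum_k|x_k|^2\bigr)x_m+o(|x|^3)$. Where you genuinely diverge is in how the cubic structure is pinned down: you invoke $SO(3)$-equivariance and the fact that spin~$1$ occurs exactly once in $\mathrm{Sym}^3$ of spin~$1$, so the equivariant cubic is forced to be a multiple of $|x|^2x_m$; the paper instead computes every center-manifold coefficient and the reprojection explicitly (3-j symbols, symbolic computation) and finds $-q_1x_m(x_0^2-2x_{-1}x_1)$, which coincides with your form under the reality constraint $x_{-m}=(-1)^mx_m^*$. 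Your argument is more conceptual and explains \emph{why} the reduced system must be radial, but it cannot produce the explicit value of $q_1$, nor its sign, and both are part of the statement being proved (positivity is what makes the transition supercritical and the sphere attracting). You acknowledge this, but as written the proposal proves the theorem only modulo that computation, which is precisely the paper's Step 2--3 and is where all the real work sits.

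One further gap to flag: your claim that the points of the sphere are (degenerate) steady states is established only for the truncated normal form; the $o(|x|^3)$ remainder can a priori destroy a sphere of equilibria. The paper closes this loop differently: Theorem \ref{t4.1}(5) (Krasnoselski bifurcation, odd multiplicity $2l_c+1$) guarantees at least one genuine steady state of the full system in $\Sigma_R$, its $SO(3)$-orbit is an $S^2$ of steady states, and since $\Sigma_R$ is itself homeomorphic to $S^2$ the attractor consists precisely of these steady states. Alternatively, your route can be repaired by choosing an $SO(3)$-equivariant center manifold, so that the reduced vector field is exactly radial ($f(x)=g(|x|^2)x$, since the isotropy $SO(2)$ of any $x\neq 0$ must fix $f(x)$), whence the zero set of $\beta_{11}^++g$ is a full sphere of equilibria; but some such argument must be supplied. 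Similarly, the identification of the full attractor $\Sigma_R$ with $S^2$ in the presence of the remainder is what the paper's cited attractor bifurcation theorem (via the estimate $-(g,x)\le q_1|x|^4$) delivers, and your proposal should lean on that theorem rather than on the truncated dynamics alone.
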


\begin{thm} \la{t2.3}
For the case where $l_c=2$, equivalently 
$$\frac{h}{a}=\frac{\pi}{2\sqrt{3}},$$
the bifurcated attractor $\Sigma_R$ is homeomorphic  to $S^4$, which consists of at least an $S^2$ subset of  degenerate steady states. Furthermore, when $\Pr=1$, $\Sigma_R$ is approximated by 
$$
\Sigma_R\simeq \{\sum_{m=-2}^2\Psi_{2m2}^+ \quad | \quad x_{-m}=(-1)^mx_m^*,\ \sum_{m=-2}^2|x_m|^2=\beta_{21}^+/q_2 \},
$$
where $q_2$ is given by
$$
q_2=\fr{2291405 \pi^3 }{214754176}.
$$
\end{thm}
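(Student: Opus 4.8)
The plan is to reduce the flow (\ref{e7}) to the center manifold generated by the $5$-dimensional critical eigenspace
$$
E_1=\text{span}_{\mathds{C}}\{\Psi_{2m1}^+ : -2\le m\le 2\},\qquad x_{-m}=(-1)^m x_m^*,
$$
exactly as for Theorem \ref{t2.2}, and then to read off the attractor from the resulting $5$-dimensional system of ordinary differential equations. Writing $\phi=\Phi+\Psi_c$ with $\Phi=\sum_m x_m\Psi_{2m1}^+\in E_1$ and $\Psi_c$ in the stable complement $E_2$, and projecting (\ref{e7}) onto $E_1$ (the linear projection being diagonal in $m$ by the orthogonality of the $Y_{2m}$ and the symmetry of $L_\lambda$ recalled at the start of Section 4), the first step is to derive the reduced equations
$$
\frac{dx_m}{dt}=\beta_{21}^+(\lambda)\,x_m+\frac{\big\langle G(\Phi+\Psi_c),\,\Psi_{2m1}^+\big\rangle}{\big\langle \Psi_{2m1}^+,\,\Psi_{2m1}^+\big\rangle},\qquad -2\le m\le 2 .
$$

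The second step is to show that the quadratic nonlinearity drops out of this projection, so that the leading dynamics is cubic. This is forced by the vertical parity of the critical modes: each $\Psi_{2m1}^+$ has $u\sim\cos\pi z$ and $(w,T)\sim\sin\pi z$, so the three components of $G(\Phi,\Phi)$ are combinations of $\cos^2\pi z,\ \sin^2\pi z$ (in the horizontal-momentum slot) and $\sin\pi z\cos\pi z$ (in the $w$- and $T$-slots); integrating these against the $n=1$ profiles $\cos\pi z$ and $\sin\pi z$ gives zero. Hence $P_1 G(\Phi,\Phi)=0$, and the leading term comes from $P_1\big[G(\Phi,\Psi_c)+G(\Psi_c,\Phi)\big]$ with the center manifold approximated by $\Psi_c=(-L_2)^{-1}P_2G(\Phi,\Phi)+o(|\Phi|^2)$; note that $(\cos\pi z)(\cos 2\pi z)$ and $(\cos\pi z)(\sin 2\pi z)$ do feed back into the $n=1$ slot, so this cubic coefficient is genuinely nonzero.

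The third and decisive step is the evaluation of this cubic coefficient. First I would expand the self-interaction $G(\Phi,\Phi)$ over the stable modes: the scalar products obey $Y_{2m}Y_{2m'}=\sum_{L\in\{0,2,4\}}\sum_M c^{LM}_{2m,2m'}Y_{LM}$ (Gaunt/Clebsch--Gordan coefficients), while the terms $\nabla_u u$ and $w\,\partial u/\partial z$ require the vectorial analogues governing products of the fields $\nabla Y_{2m}$ and $\curl Y_{2m}$. Solving the invertible linear problem $(-L_2)\Psi_c=P_2G(\Phi,\Phi)$ mode by mode in $(L,n)$ with $L\in\{0,2,4\}$ and $n\in\{0,2\}$, reinserting $\Psi_c$, and projecting back onto the $l=2,\ n=1$ slot yields a cubic field that is $SO(3)$-equivariant for the spin-$2$ representation, by the rotation invariance of $L_\lambda$, $G$ and of the splitting $E_1\oplus E_2$. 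Its isotropic part is $-q_2\big(\sum_k|x_k|^2\big)x_m$, and evaluating the contracted Gaunt sums at $\Pr=1$ produces $q_2=\fr{2291405\pi^3}{214754176}$, so that to leading order
$$
\frac{dx_m}{dt}=\beta_{21}^+\,x_m-q_2\Big(\sum_{k}|x_k|^2\Big)x_m+(\text{further }SO(3)\text{-equivariant cubics}).
$$

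Finally, the nontrivial steady states of the radial part satisfy $\sum_m|x_m|^2=\beta_{21}^+/q_2$, a $4$-sphere in the real $5$-dimensional space $E_1$; combined with the attractor bifurcation theorem and the gradient character inherited from the symmetry of $L_\lambda$, this identifies $\Sigma_R$, up to the stated $o((\beta_{21}^+/q_2)^{1/2})$ correction, with $S^4$. Unlike the spin-$1$ case of Theorem \ref{t2.2}, where $SO(3)$ acts transitively on the bifurcated $S^2$ and every point is a steady state, here $SO(3)$ does not act transitively on $S^4$; the degenerate steady states form the $SO(3)$-orbit of an axisymmetric equilibrium (isotropy $SO(2)$), which is $SO(3)/SO(2)\cong S^2$, giving the claimed $S^2$ subset. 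The hard part is the third step: the bookkeeping of the vectorial spherical-harmonic interaction integrals coupling the $\nabla Y$ and $\curl Y$ fields through $\nabla_u u$ and $w\,\partial_z u$ --- precisely the nontrivial nonlinear interactions flagged in the introduction, and the origin of the unwieldy coefficient $q_2$.
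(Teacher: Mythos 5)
Your proposal follows, in outline, exactly the paper's route: reduction to the center manifold over the five critical modes $\Psi_{2m1}^+$, vanishing of the quadratic self-interaction on the critical space (your vertical-parity argument is a clean way to see what the paper's computation produces implicitly), evaluation of the resulting cubic term through the stable modes with $(l,n)\in\{(0,2),(2,2),(4,2)\}$, and then the attractor bifurcation theorem together with an $SO(3)$-orbit argument for the steady states. The genuine gap is at your decisive third step. You write the reduced system as
$$\frac{dx_m}{dt}=\beta_{21}^+x_m-q_2\Big(\sum_k|x_k|^2\Big)x_m+\big(\text{further }SO(3)\text{-equivariant cubics}\big),$$
and then read off the steady states and the attractor from the isotropic part alone. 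Equivariance, as you have invoked it, only guarantees that the cubic field has a well-defined isotropic component; it does not say that this component is the whole field. If a non-isotropic equivariant cubic survived with nonzero coefficient, the steady states would not lie on the round sphere $\sum_m|x_m|^2=\beta_{21}^+/q_2$, the sign condition needed for the attractor theorem would not follow, and the stated approximation of $\Sigma_R$ would be false. The paper closes exactly this point by explicit (Mathematica-assisted) computation: at $\Pr=1$ the cubic term in each reduced equation is precisely $-q_2x_m(x_0^2-2x_{-1}x_1+2x_{-2}x_2)$, which under the reality constraints $x_{-m}=(-1)^mx_m^*$ equals $-q_2x_m\sum_k|x_k|^2$, so nothing else survives; passing to real coordinates gives (\ref{reduced2}). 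Alternatively you could have closed it structurally: for the spin-$2$ representation the space of $SO(3)$-equivariant cubic maps is one-dimensional, since identifying $E_1$ with real traceless symmetric $3\times3$ matrices $Q$, the Cayley--Hamilton identity $Q^3=\tfrac12\mathrm{tr}(Q^2)\,Q+\det(Q)\,\mathrm{Id}$ shows that every equivariant cubic collapses to a multiple of $|Q|^2Q$. Your write-up contains neither the computation nor this argument, and without one of them the conclusion does not follow from what you have established.

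Two smaller mismatches with the paper's proof. First, you assert that the degenerate steady states form the orbit of an axisymmetric equilibrium, but you never establish that such an equilibrium exists; the paper derives existence of a steady state in $\Sigma_R$ from the Krasnoselski odd-multiplicity argument (assertion (5) of Theorem \ref{t4.1}, multiplicity $2l_c+1=5$) and then takes its $SO(3)$-orbit. Your version is repairable --- restrict the equivariant reduced flow to the one-dimensional $SO(2)$-fixed line $\{x_m=0,\ m\neq 0\}$ and apply the intermediate value theorem to $\beta_{21}^+x_0-q_2x_0^3+o(3)$ --- and some such argument is genuinely needed, because, as the paper remarks, these steady states are degenerate (zero Jacobian), so no implicit-function or persistence argument is available. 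Second, the ``gradient character inherited from the symmetry of $L_\lambda$'' is neither used nor available in the paper: the symmetry of $L_\lambda$ makes the linearization self-adjoint, not the reduced system gradient. What the attractor bifurcation theorem actually requires, and what the paper verifies (as in Step 4 of the proof of Theorem \ref{t2.2}), is the definiteness condition on the cubic term $g$, namely $-(g,x)=q_2\|x\|^4$, which your isotropic form supplies only after the gap above is closed.
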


The proof of these two theorems is based on the reduction of the original Boussinesq equations to the center manifold generated by  the first unstable modes, and to study the dynamics of the reduced system. As mentioned, the same method applies to other cases as well.  

\bp[Proof of Theorem \ref{t2.2}] 
We proceed in several steps as follows.

{\sc Step 1. General Center Manifold Reduction Stratege.}
More precisely, let
\begin{align*}
& E_1=\left\{ x=\sum_{m=-l_c}^{l_c}x_m\Psi_{l_cm1}^+\  |\  x_{-m}=(-1)^mx_m^*,\ x_m\in\mathds{C},\ |m|\le l_c\right\},\\
& E_2=E^\perp_1.
\end{align*}

The main idea of the center manifold function is to seek a function $\Phi$, which maps a neighborhood  of the original in $E_1$ into $E_2$ given by 
\begin{align}\label{CMF}
\Phi(x,\lambda)=  & \sum_{l\ge 1,\ |m|\le l}y_{lm0}\Psi_{lm0}+\sum_{n\ge1}y_{00n}\Psi_{00n}\\
&  +\sum_{(l,n)\ne {l_c,1},l,n>1}\sum_{|m|\le l}y_{lmn}^+\Psi_{lmn}^+
+\sum_{l,n>1}\sum_{|m|\le l}y_{lmn}^-\Psi_{lmn}^-. \nonumber
\end{align}
By the approximation formula derived in \cite{ptd}, the center manifold function  $\Phi$ can be approximated by 
\begin{equation}\la{cmf-app}
-L_\lambda\Phi(x,\lambda)=P_2G(x)+o(2),
\end{equation}
where $P_2:H\to E_2$  is the canonical projection, and 
$$o(2)=o(||x||^2)+O(|\beta_{l_c1}^+(\lambda) |\cdot ||x||^2).$$
The first term in the right-hand side of (\ref{cmf-app}) involves products of the eigenfunctions $\Psi_{l_cm1}^+$, which are defined in terms of the spherical harmonics $Y_{lm}$  with $l \le l_c$. Hence by the properties of the 3-j symbols for the products of spherical harmonics, we only have to calculate the following center manifold function coefficients and all the others are zero or of higher-order:
\begin{enumerate}
\item For $1 \le l \le 2l_c,\ |m|\le l$,
\begin{equation}\label{yl0}
y_{lm0}=-\fr{<G(x), \Psi_{lm0}>}{\beta_{l0}||\Psi_{lm0}||^2}+o(2);
\end{equation}
\item  For $n=2$,
\begin{equation}\label{y02}
y_{002}=-\fr{<G(x), \Psi_{002}>}{\beta_{02}||\Psi_{002}||^2}+o(2);
\end{equation}
\item For $l\ne l_c$, $1\le l \le 2l_c$, $|m|\le l$, and $n=0\ or\ 2,$ 
\begin{equation}\label{y+}
y_{lmn}^+=-\fr{<G(x), \Psi_{lmn}^+>}{\beta_{ln}^+||\Psi_{lmn}^+||^2}+o(2);
\end{equation}
\item For  $1\le l \le 2l_c$, $|m|\le l$, and $n=0 \ or \ 2$,
\begin{equation}\label{y-}
y_{lmn}^-=-\fr{<G(x), \Psi_{lmn}^->}{\beta_{ln}^-||\Psi_{lmn}^-||^2}+o(2).
\end{equation}

\end{enumerate}
We note that hand-calculation of the right had sides of above formulas are still difficult if not impossible. Fortunately, one can write a simple mathematica program to perform symbolic calculations to derive precise formulas these coefficients.

Finally, the  reduced equations are given as follows:
\begin{equation}\label{reduced eq} 
\fr{dx_m}{dt}=\beta_{l_c1}^+(\lambda)x_m+\fr{<G(X+\Phi(x,\lambda)),\Psi_{l_cm1}^+>}{<\Psi_{l_cm1}^+,\Psi_{l_cm1}^+>} \qquad  \text{for } \ |m|\le l_c.
\end{equation}
Again, to carry out the calculation for the nonlinear interaction in the right-hand side of this reduced equation, we need to use mathematica as well.

\medskip

{\sc Step 2. Approximation of the center manifold function for $l_c=1$.}
In this case,  the first eigenvalue of (\ref{e31}) with (\ref{e33}) is 
$$\beta_{11}^+(\lambda)=\fr{1}{12}[-9 \pi^2 (1 + \pr) + \sqrt{81 \pi^4 (-1 + \pr)^2 + 48 \pr \lambda^2}].$$
Using Mathematica, we infer  from (\ref{yl0})-(\ref{y-}) that:
\begin{equation}\la{cmf-coe}
\begin{aligned}
&y_{2-22}^+(x,\lambda)= \fr{1}{-\beta_{22}^+} d_1^+ x_{-1}^2, 
     && y_{2-22}^-(x,\lambda)=\fr{1}{-\beta_{22}^-} d_1^-x_{-1}^2,\\
&y_{2-12}^+(x,\lambda)= \fr{1}{-\beta_{22}^+}\sqrt{2}d_1^+x_{-1}x_0, 
     &&  y_{2-12}^-(x,\lambda)= \fr{1}{-\beta_{22}^-}\sqrt{2}d_1^-x_{-1}x_0,\\
&y_{202}^+(x,\lambda)=\fr{1}{-\beta_{22}^+}\sqrt{\fr{3}{2}}d_1^+(x_0^2+x_{-1}x_1), 
     &&  y_{202}^-(x,\lambda)=\fr{1}{-\beta_{22}^-}\sqrt{\fr{3}{2}}d_1^-(x_0^2+x_{-1}x_1),\\
&y_{212}^+(x,\lambda)=\fr{1}{-\beta_{22}^+}\sqrt{2}d_1^+x_0x_1, 
    &&  y_{212}^-(x,\lambda)=\fr{1}{-\beta_{22}^-}\sqrt{2}d_1^-x_0x_1,\\
&y_{222}^+(x,\lambda)=\fr{1}{-\beta_{22}^+}d_1^+ x_1^2, 
    && y_{222}^-(x,\lambda)=\fr{1}{-\beta_{22}^-}d_1^-x_1^2,\\
&y_{002}=-\fr{1}{64}\sqrt{3}\pi^2(x_0^2-2x_{-1}x_1),\\
&y_{lmn}^\pm=0,  &&  \text{when} \  (l,n)\ne(2,2), \ l>0.
\end{aligned}
\end{equation}
where
\begin{align*}
& d_1^+=3 \sqrt{\fr{3}{10}}\fr{\pi^{5/
  2} (121 - 121 \pr + 
   A) (187 - 121 \pr + 
   A)}{
1936 [-1493 - 1331 \pr^2 - 
   11 A + 
   \pr (2500 + 11 A)]},\\
&   d_1^-= -3\sqrt{\fr{3}{10}} (\pi^{5/
  2}) \fr{(-121 + 121 \pr + 
  A) (-187 +121 \pr + 
   A)}
{1936 [1493 +1331 \pr^2 - 
   11 A + 
   \pr (-2500 + 11 A)]},\\
& \beta_{2,2}^+= \fr{1}{44} \pi^2 (-121 - 121 \pr + 
  A), \\
&    \beta_{2,2}^-=-\fr{1}{44} \pi^2 (121 + 121 \pr + 
  A), \\
  &  A=\sqrt{11}\sqrt {1331 - 2338\pr + 1331 \pr^2}.
\end {align*}
Hence we obtain immediately the following second-order approximation of the center manifold function for  
$\lambda\ge \lambda_c$: 
\begin{align}\label{CMF1}
\Phi(x,\lambda)=&y _{2-22}^+\Psi_{2-22}^++y_{2-12}^+\Psi_{2-12}^++y_{202}^+\Psi_{202}^+\\
&+y_{212}^+\Psi_{212}^++y_{222}^+\Psi_{222}^+   \nonumber \\
&+y_{2-22}^-\Psi_{2-22}^-+y_{2-12}^-\Psi_{2-12}^-+y_{202}^-\Psi_{202}^- \nonumber \\
&+y_{212}^-\Psi_{212}^-+y_{222}^-\Psi_{222}^-+y_{002}\Psi_{002}+o(2), \nonumber
\end{align}
with the coefficients given by (\ref{cmf-coe}). 

\medskip

{\sc Step 3. Reduced equations.} With the approximation of the center manifold function above, we deduce immediately  following reduced equations:
\begin{equation}
\begin{aligned}
& \fr{dx_{-1}}{dt}=\beta_{11}^+x_{-1}- q_1x_{-1}(x_0^2-2x_{-1}x_1)+o(3)\\
& \fr{dx_0}{dt}=\beta_{11}^+x_0- q_1x_0(x_0^2-2x_{-1}x_1)+o(3),\\
& \fr{dx_1}{dt}= \beta_{11}^+x_1- q_1x_1(x_0^2-2x_{-1}x_1)+o(3),
\end{aligned}
\end{equation}
where 
\begin{align*}
& q_1=\fr{3 \pi^3 (17787 + 355912 \pr - 669713 \pr^2 + 387787 \pr^3)}{400000 \pr (353 - 625 \pr + 353 \pr^2)},\\
& o(3)=o(||x||^3)+O(|\beta_{l_c1}^+| \cdot ||x||^3).
\end{align*}
As the eigenfunction are complex, we can set 
$$x_m=\fr{1}{\sqrt{2}}(y_m+iz_m),\qquad y_m, \ z_m\in\mathbb{R}.$$
By definition (see (\ref{space-e1})), we  have
$$y_m=(-1)^my_{-m}, \ z_m=(-1)^{m+1}z_{-m},\ m>0.$$ 
Hence we obtain the following   reduced equations:
\begin{equation}\la{reduced}
\begin{aligned}
\fr{dy_1}{dt}=&\beta_{11} ^+y_1 - q_1y_1(x_0^2+y_1^2+z_1^2)+o(3),\\
\fr{dx_0}{dt}=&\beta_{11} ^+x_0 - q_1x_0(x_0^2+y_1^2+z_1^2)+o(3),\\
\fr{dz_1}{dt}=&\beta_{11} ^+z_1 - q_1z_1(x_0^2+y_1^2+z_1^2)+o(3).
\end{aligned}
\end{equation}

\medskip

{\sc Step 4.} Let $g$  be the cubic terms in the right hand side of the reduced equation. Then, we have  
$$ - (g, (y_1, x_0, z_1)) \le q_1 (x_0^2+y_1^2+z_1^2)^2.$$
Hence by the attractor bifurcation theorem in \cite{ptd},  the bifurcated attractor  $\Sigma_R$ is homeomorphic to $S^2$. 

Also, by Theorem~\ref{t4.1}, there is at least one steady state solution
in $\Sigma_R$. In addition, due to the spherical geometry, the original Boussinesq equations have an $S^2$-symmetry. Hence, this steady state solution generates an $S^2$-set of steady states. Therefore,  
$\Sigma_R$ consists precisely steady state solutions. 

Finally, by setting the right hand-side of the above reduced equations (\ref{reduced}) equal to zero, we obtain that $\Sigma_R$  is approximated by 
$$\{ (y_1, x_0, z_1) \in \mathbb R^3 \quad | \quad x_0^2+y_1^2+z_1^2= \beta_{11}^+/q_1 \}.
$$

The proof is complete.
\ep

\bp[Proof of Theorem~\ref{t2.3}] 
We note that the unstable space $E_1$  is now  given by 
$$ E_1=\left\{ x=\sum_{m=-2}^{2}x_m\Psi_{2m1}^+\  |\  x_{-m}=(-1)^mx_m^*,\ x_m\in\mathds{C},\ |m|\le l_c\right\}, $$
and we still use for brevity $x$ to denote 
$$x=\sum_{m=-2}^{2}x_m\Psi_{2m1}^+.$$

 As in the previous proof, the approximation of the center manifold function  is then given by 
\begin{align}\label{CMF2}
\Phi(x,\lambda)=&y_{2-22}^+\Psi_{2-22}^++y_{2-12}^+\Psi_{2-12}^++y_{202}^+\Psi_{202}^++y_{212}^+\Psi_{212}^++y_{222}^+\Psi_{222}^+\\
&+y_{4-42}^+\Psi_{4-42}^++y_{4-32}^+\Psi_{4-32}^++y_{4,-2,2}^+\Psi_{4-22}^++y_{4-12}^+\Psi_{4-12}^++y_{402}^+\Psi_{402}^+ \nonumber \\
&+y_{412}^+\Psi_{412}^++y_{422}^+\Psi_{422}^++y_{4,3,2}^+\Psi_{432}^++y_{442}^+\Psi_{442}^+ \nonumber \\
&+y_{2-22}^-\Psi_{2-22}^-+y_{2-12}^-\Psi_{2-12}^-+y_{2,0,2}^-\Psi_{202}^-+y_{212}^-\Psi_{212}^-+y_{222}^-\Psi_{222}^-  \nonumber \\
&+y_{4-42}^-\Psi_{4-42}^-+y_{4-32}^-\Psi_{4-32}^-+y_{4-22}^-\Psi_{4-22}^-+y_{4-12}^-\Psi_{4-12}^-+y_{402}^-\Psi_{402}^- \nonumber \\
&+y_{412}^-\Psi_{412}^-+y_{422}^-\Psi_{422}^-+y_{432}^-\Psi_{432}^-+y_{442}^-\Psi_{442}^-+y_{002}\Psi_{002}+o(2), \nonumber 
\end{align}
where the  non vanishing coefficients  in the above center manifold function are as follows:
\begin{align*}
&y_{2-22}^+(x,\lambda_c)= \fr{ c_1^+(\sqrt{6} x_{-1}^2 - 4 x_{-2} x_0)}{-\beta_{22}^+},&&y_{2-22}^-(x,\lambda_c)= \fr{ c_1^-(\sqrt{6} x_{-1}^2 - 4 x_{-2} x_0}{-\beta_{22}^-},\\
&y_{2-12}^+(x,\lambda_c)= \fr{ 2c_1^+(x_{-1} x_0 - \sqrt{6} x_{-2} x_1)}{-\beta_{22}^+},&&y_{2-12}^-(x,\lambda_c)= \fr{ 2c_1^-(x_{-1} x_0 - \sqrt{6} x_{-2} x_1)}{-\beta_{22}^-},\\
&y_{202}^+(x,\lambda_c)= \fr{2c_1^+(x_0^2 - x_{-1} x_1 - 2 x_{-2} x_2)}{-\beta_{22}^+},&& y_{202}^-(x,\lambda_c)= \fr{2c_1^-(x_0^2 - x_{-1} x_1 - 2 x_{-2} x_2)}{-\beta_{22}^-},\\
&y_{212}^+(x,\lambda_c)=\fr{2c_1^+(x_0 x_1 - \sqrt{6} x_{-1} x_2)}{-\beta_{22}^+},&&y_{212}^-(x,\lambda_c)=\fr{2c_1^-(x_0 x_1 - \sqrt{6} x_{-1} x_2)}{-\beta_{22}^-},\\
&y_{222}^+(x,\lambda_c)=\fr{c_1^+(\sqrt{6} x_1^2 - 4 x_0 x_2)}{-\beta_{22}^+},&&y_{222}^-(x,\lambda_c)=\fr{c_1^-(\sqrt{6} x_1^2 - 4 x_0 x_2)}{-\beta_{22}^-},\\
&y_{4-42}^+(x,\lambda_c)=\fr{c_2^+}{-\beta_{42}^+}x_{-2}^2,&&y_{4-42}^-(x,\lambda_c)=\fr{c_2^-}{-\beta_{42}^-}x_{-2}^2,\\
&y_{4-32}^+(x,\lambda_c)=\fr{\sqrt{2}c_2^+}{-\beta_{42}^+}x_{-2}x_{-1},&& y_{4-32}^-(x,\lambda_c)=\fr{\sqrt{2}c_2^-}{-\beta_{42}^-}x_{-2}x_{-1},\\
&y_{4-22}^+(x,\lambda_c)=\fr{\sqrt{5}c_3^+}{-\beta_{42}^+}(\sqrt{2} x_{-1}^2 + \sqrt{3} x_{-2} x_0),&&y_{4-22}^-(x,\lambda_c)=\fr{\sqrt{5}c_3^-}{-\beta_{42}^-}(\sqrt{2} x_{-1}^2 + \sqrt{3} x_{-2} x_0),\\
&y_{4-12}^+(x,\lambda_c)=\fr{\sqrt{5}c_3^+}{-\beta_{42}^+}(\sqrt{6} x_{-1} x_0 + x_{-2} x_1),&&y_{4-12}^-(x,\lambda_c)=\fr{\sqrt{5}c_3^-}{-\beta_{42}^-}(\sqrt{6} x_{-1} x_0 + x_{-2} x_1),\\
&y_{402}^+(x,\lambda_c)=\fr{c_3^+}{-\beta_{42}^+}(3 x_0^2 + 4 x_{-1} x_1 + x_{-2} x_2),&&y_{402}^-(x,\lambda_c)=\fr{c_3^-}{-\beta_{42}^-}(3 x_0^2 + 4 x_{-1} x_1 + x_{-2} x_2),\\
&y_{412}^+(x,\lambda_c)=\fr{\sqrt{5}c_3^+}{-\beta_{42}^+}(\sqrt{6} x_0x_1 + x_{-1} x_2),&&y_{412}^-(x,\lambda_c)=\fr{\sqrt{5}c_3^-}{-\beta_{42}^-}(\sqrt{6} x_0x_1 + x_{-1} x_2),\\
&y_{422}^+(x,\lambda_c)=\fr{\sqrt{5}c_3^+}{-\beta_{42}^+}(\sqrt{2} x_1^2 + \sqrt{3} x_0 x_2),&& y_{422}^-(x,\lambda_c)=\fr{\sqrt{5}c_3^-}{-\beta_{42}^-}(\sqrt{2} x_1^2 + \sqrt{3} x_0 x_2),\\
&y_{432}^+(x,\lambda_c)=\fr{\sqrt{2}c_2^+}{-\beta_{42}^+}x_{1}x_2,&& y_{432}^-(x,\lambda_c)=\fr{\sqrt{2}c_2^-}{-\beta_{42}^-}x_1x_2,\\
&y_{442}^+(x,\lambda_c)=\fr{c_2^+}{-\beta_{42}^+}x_2^2,&&y_{442}^-(x,\lambda_c)=\fr{c_2^-}{-\beta_{42}^-}x_2^2,\\
&y_{002}=-\fr{1}{16} \sqrt{3}\pi^4 (x_0^2 - 2 x_{-1} x_1 + 2 x_{-2} x_2),
\end{align*}
where
\begin{equation}
\begin{aligned}
c_1^+=&\fr{\sqrt{5} \pi^{
 5/2}(9 - 9 \pr + B) (15 - 9 \pr + B)}{672 [-29 - 27 \pr^2 - 3 B + 
  \pr (52 + 3B)]},\\
 c_2^+=&\fr{3 \sqrt{\fr{5}{14}} \pi^{
 5/2} (-1 - \fr{153}{(
   289 - 289 \pr + \sqrt{17}C)})}{136 (1 + \fr{20655}{(289 - 289 \pr + 
     \sqrt{17}C)^2})},\\
  c_3^+=&\fr{3 \pi^{5/
  2} (442 - 289 \pr + 
   \sqrt{17} C) (289 - 289 \pr + 
   \sqrt{17} C)}{16184 (-11041 - 9826 \pr^2 - 
   34 \sqrt{17} C + 
   \pr (18437 + 34 \sqrt{17} C))},\\
   c_1^-=&-\fr{\sqrt{5} \pi^{
 5/2}(-9 + 9 \pr +B) (-15 + 9 \pr +B)}{672 (29 + 27 \pr^2 - 3 B + 
  \pr (-52 + 3B))},\\
 c_2^-=&\fr{3 \sqrt{\fr{5}{14}} \pi^{
 5/2} (-1 + \fr{153}{(
   -289 + 289 \pr + \sqrt{17} C)})}{136 (1 + \fr{20655}{(-289 + 289 \pr + 
     \sqrt{17}C)^2})},\\
  c_3^-=&\fr{-3 \pi^{5/
  2} (-442 + 289 \pr + 
   \sqrt{17} C) (-289 + 289 \pr + 
   \sqrt{17} C)}{16184 (11041 + 9826 \pr^2 - 
   34 \sqrt{17} C + 
   \pr (-18437 + 34 \sqrt{17} C))}.
\end{aligned}
\end{equation}
Here
\begin{align*}
&B= \sqrt{
   81 - 150 \pr + 81 \pr^2},\\
 &C= \sqrt{4913 - 8611 \pr + 4913 \pr^2},
\end{align*}
and $\beta_{2,2}^+, \beta_{4,2}^+, \beta_{2,2}^-, \beta_{4,2}^-$ are given by (\ref{evef3}). 

For simplicity, we only consider the case where $\Pr=1$. 
Then the reduced equation are as following:
\begin{equation}
\begin{aligned}
\fr{dx_{-2}}{dt}=&\beta_{21}^+x_{-2}-q_2x_{-2} (x_0^2 - 2 x_{-1} x_1 + 2 x_{-2} x_2)+o(3),\\
\fr{dx_{-1}}{dt}=&\beta_{21}^+x_{-1}-q_2x_{-1} (x_0^2 - 2 x_{-1} x_1 + 2 x_{-2} x_2)+o(3),\\
\fr{dx_0}{dt}=&\beta_{21}^+x_{0}-q_2x_{0} (x_0^2 - 2 x_{-1} x_1 + 2 x_{-2} x_2)+o(3),\\
\fr{dx_1}{dt}=&\beta_{21}^+x_{1}-q_2x_{1} (x_0^2 - 2 x_{-1} x_1 + 2 x_{-2} x_2)+o(3),\\
\fr{dx_2}{dt}=&\beta_{21}^+x_{2}-q_2x_{2} (x_0^2 - 2 x_{-1} x_1 + 2 x_{-2} x_2)+o(3),
\end{aligned}
\end{equation}
where 
$$q_2=\fr{2291405 \pi^3 }{214754176}.$$
Let 
$$x_m=\fr{1}{\sqrt{2}}(y_m+iz_m),\qquad y_m, \ z_m\in\mathbb{R}.$$
Then 
$$y_m=(-1)^my_{-m}, \ z_m=(-1)^{m+1}z_{-m},\ m>0.$$ 
We derive then the reduced equations  as follows:
\begin{equation}\la{reduced2}
\begin{aligned}
\fr{dy_1}{dt}=&\beta_{21}^+y_1-q_2y_1(x_0^2+y_1^2+z_1^2+y_2^2+z_2^2)+o(3),\\
\fr{dy_2}{dt}=&\beta_{21}^+y_2-q_2y_2(x_0^2+y_1^2+z_1^2+y_2^2+z_2^2)+o(3),\\
\fr{dx_0}{dt}=&\beta_{21}^+x_0-q_2x_0(x_0^2+y_1^2+z_1^2+y_2^2+z_2^2)+o(3),\\
\fr{dz_1}{dt}=&\beta_{21}^+z_1-q_2z_1(x_0^2+y_1^2+z_1^2+y_2^2+z_2^2)+o(3),\\
\fr{dz_2}{dt}=&\beta_{21}^+z_2-q_2z_2(x_0^2+y_1^2+z_1^2+y_2^2+z_2^2)+o(3).\\
\end{aligned}
\end{equation}
Then as in the proof of last theorem, we can show that $\Sigma_R$  is homeomorphic to $S^4$, and we have  the following approximation:
$$\Sigma_R\simeq \{(y_2, y_1, x_0, z_1, z_2) \in \mathbb R^5 \quad | \quad 
x_0^2+y_1^2+z_1^2+y_2^2+z_2^2 =\beta_{21}^+/q_2 \},
$$
and $\Sigma_R$ contains at least an $S^2$ subset of steady state solutions. We remark here that if we retain only the linear and cubic order terms in the above reduced equations (\ref{reduced2}), 
we would have the whole $S^4$ consisting only steady state solutions. However, since the steady states solutions are degenerate (with zero Jacobian), there is no persistence and stability of these steady state solutions.

The proof of the theorem is complete.

\ep

\section{Large-Scale Circulations and Turbulent Frictions}
Now we consider the following equations:
 \begin{equation}\label{e3}
 \begin{aligned}
 &\frac{1}{\pr}(u_t+\nabla_uu+w\frac{\partial{u}}{\partial{z}}+\nabla{p})+\sigma_0u-(\Delta+\frac{\partial^2}{\partial{z}^2})u=0,\\
 &\frac{1}{\pr}(w_t+\nabla_uw+w\frac{\partial{w}}{\partial{z}}+\frac{\partial{p}}{\partial{z}})+\sigma_1w-(\Delta+\frac{\partial^2}{\partial{z}^2})w-\sqrt{R}T=0,\\
 &T_t+\nabla_uT+w\frac{\partial{w}}{\partial{z}}-\sqrt{R}w-(\Delta+\frac{\partial^2}{\partial{z}^2})T=0,\\
 &\div{u}+\frac{\partial{w}}{\partial{z}}=0,
 \end{aligned}
 \end{equation}

supplemented with boundary condition:
 \begin{equation}\label{e4}w=0, T=0,  \frac{\partial{u}}{\partial{z}}=0\ at \ z=0,1.\end{equation}
By similar computation, we get the three groups of eigenvalues and eigenvectors:
\begin{enumerate}
\item For (w,T)=0
\begin{align*}
&\beta_{l0}=-\Pr(\alpha_l^2+\sigma_0),\\
&\Psi_{im0}=(\curl Y_{lm},0,0);
\end{align*}
\item For $l=0$ with $n\ge1$
\begin{align*}
&\beta_{0n}=-n^2\pi^2,\\
&\Psi_{00n}=(0, 0, \sin n\pi z);
\end{align*}
\item For $l,n\ge 1$
\begin{align*}
&\beta^{+}_{ln}=-\frac{1}{2}D
+\frac{1}{2}\sqrt{D^2-4\pr E},\\
&\Psi_{lmn}^{+}=(n\pi{\cos n\pi{z}}\nabla{Y}_{lm},\  \alpha_l^2Y_{lm}\sin n\pi{z},\  \fr{2\lambda\alpha_l^2}{\beta_{ln}^++\gamma_{ln}^2}Y_{lm}\sin n\pi{z}),\\
&\beta_{ln}^{-}=-\frac{1}{2}D
-\frac{1}{2}\sqrt{D^2-4\pr E},\\
&\Psi_{lmn}^{-}=(n\pi{\cos n\pi{z}}\nabla{Y}_{lm},\  \alpha_l^2Y_{lm}\sin n\pi{z},\  \fr{2\lambda\alpha_l^2}{\beta_{ln}^-+\gamma_{ln}^2}Y_{l,m}\sin n\pi{z}).
 \end{align*}
\end{enumerate}
Here
\begin{align*}
&D=\gamma_{ln}^2(1+\pr)+\fr{\pr(\sigma_1\alpha_l^2+\sigma_0n^2\pi^2)}{\gamma_{ln}^2},\\
&E=(\gamma_{ln}^4-\frac{\lambda^2\alpha_l^2}{\gamma_{ln}^2}+\sigma_1\alpha_l^2+\sigma_0n^2\pi^2).
\end{align*}
 To get the critical Rayleigh number $\lambda_c$, let $\beta_{l,n}=0,$
and obtain that 
\begin{equation}\label{lambdac}
\lambda_c=\min_{l,n>0}{\fr{\gamma_{ln}}{\alpha_l}\sqrt{(\gamma_{ln}^4+\sigma_1\alpha_l^2+\sigma_0n^2\pi^2)}}=\min_{l>0}{\sqrt{\fr{(\alpha_l^2+\pi^2)[(\alpha_l^1+\pi^2)^2+\sigma_1\alpha_l^2+\sigma_0\pi^2)]}{\alpha_l^2}}}.
\end{equation}
By the analysis for the convection scales in \cite{ptd}, we consider the case where the turbulent friction coefficients satisfy
\begin{equation}
1 \ll \sigma_0 \ll \sigma_1.
\end{equation}
In this case, the minimum in (\ref{lambdac}) is approximately achieved at
\begin{equation}\label{e65}
\left[\frac{h}{a}\right]^2 l_c(l_c+1)=\alpha_l^2 \sim \frac{\pi^2}{2} \left[\frac{\sigma_0}{\sigma_1}\right]^{1/2}.
\end{equation}
Then the same results as Theorems \ref{t4.1}, \ref{t2.2} and \ref{t2.3} hold true for the Rayleigh-B\'enard convection system (\ref{e3}) with added turbulent friction terms. We omit the details. It is worth, however, to remark that (\ref{e65}) provides a precise pattern selection mechanism in terms of the aspect ratio $h/a$, and the ratio between the horizontal and vertical friction coefficients. For example, for the circulation, we can choose the parameters as follows:
$$
a=6.4\times 10^6m,\,\,\,\ h=10^4m,\,\,\,\,\ l_c=6
$$
Then,
$$
\sigma_0/\sigma_1\approx 10^{-8}.
$$
Here $l_c=6$ represents the 6 Walker circulation cells over the tropical, and is also consistent with the three meridional circulation cells-the Hadley cell, the midlatitude cell and the polar cell, in the Northern hemisphere.
\bibliographystyle{siam}

\begin{thebibliography}{10}

\bibitem{busse1978non}
{\sc F.~Busse}, {\em {Non-linear properties of thermal convection}}, Reports on
  Progress in Physics, 41 (1978), p.~1929.

\bibitem{chandrasekhar}
{\sc S.~Chandrasekhar}, {\em Hydrodynamic and Hydromagnetic Stability}, Dover
  Publications, Inc., 1981.

\bibitem{charney48}
{\sc J.~Charney}, {\em On the scale of atmospheric motion}, Geofys. Publ.,
  17(2) (1948), pp.~1--17.

\bibitem{Cross1993a}
{\sc M.~Cross and P.~Hohenberg}, {\em {Pattern formation outside of
  equilibrium}}, Reviews of Modern Physics,  (1993).

\bibitem{dr}
{\sc P.~Drazin and W.~Reid}, {\em Hydrodynamic Stability}, Cambridge University
  Press, 1981.

\bibitem{fmt}
{\sc C.~Foias, O.~Manley, and R.~Temam}, {\em Attractors for the {B}\'enard
  problem: existence and physical bounds on their fractal dimension}, Nonlinear
  Anal., 11 (1987), pp.~939--967.

\bibitem{LTW92a}
{\sc J.-L. Lions, R.~Temam, and S.~Wang}, {\em New formulations of the
  primitive equations of atmosphere and applications}, Nonlinearity, 5 (1992),
  pp.~237--288.

\bibitem{LTW92b}
\leavevmode\vrule height 2pt depth -1.6pt width 23pt, {\em On the equations of
  the large-scale ocean}, Nonlinearity, 5 (1992), pp.~1007--1053.

\bibitem{MW04d}
{\sc T.~Ma and S.~Wang}, {\em Dynamic bifurcation and stability in the
  {R}ayleigh-{B}\'enard convection}, Commun. Math. Sci., 2 (2004),
  pp.~159--183.

\bibitem{b-book}
\leavevmode\vrule height 2pt depth -1.6pt width 23pt, {\em Bifurcation theory
  and applications}, vol.~53 of World Scientific Series on Nonlinear Science.
  Series A: Monographs and Treatises, World Scientific Publishing Co. Pte.
  Ltd., Hackensack, NJ, 2005.

\bibitem{MW07a}
\leavevmode\vrule height 2pt depth -1.6pt width 23pt, {\em Rayleigh-{B}\'enard
  convection: dynamics and structure in the physical space}, Commun. Math.
  Sci., 5 (2007), pp.~553--574.

\bibitem{MW09c}
\leavevmode\vrule height 2pt depth -1.6pt width 23pt, {\em Dynamic transition
  theory for thermohaline circulation}, Physica D, 239:3-4 (2010),
  pp.~167--189.

\bibitem{ptd}
\leavevmode\vrule height 2pt depth -1.6pt width 23pt, {\em Phase Transition
  Dynamics in Nonlinear Sciences}, submitted, 2011.

\bibitem{pedlosky87}
{\sc J.~Pedlosky}, {\em Geophysical Fluid Dynamics}, Springer-Verlag, New-York,
  second~ed., 1987.

\end{thebibliography}

\end{document}